\newcommand{\N}{{\mathbb N}}
\newcommand{\Z}{{\mathbb Z}}
\newcommand{\F}{\mathbb F}
\newcommand{\C}{{\mathcal C}}
\newcommand{\tq}{\; \mid \;}
\newcommand{\PAut}{\rm PAut}
\newcommand{\supp}{\rm supp}
\newtheorem{theorem}{Theorem}
\newtheorem{lemma}[theorem]{Lemma}
\newtheorem{definition}[theorem]{Definition}
\newtheorem{remark}[theorem]{Remark}
\newtheorem{remarks}[theorem]{Remarks}
\begin{document}

\title{New advances in permutation decoding of first-order Reed-Muller codes  
\thanks{Supported by the Spanish Government under Grant PID2020-113206GBI00
funded by MCIN/AEI/10.13039/501100011033.
}}

\author{\IEEEauthorblockN{Jos\'e Joaqu\'{i}n Bernal
 and
Juan Jacobo Sim\'on (Member, IEEE). \\
\IEEEauthorblockA{Departamento de Matem\'{a}ticas\\
Universidad de Murcia,
30100 Murcia, Spain.\\ Email: \{josejoaquin.bernal, jsimon\}@um.es} \\
}}

\maketitle

\begin{abstract}
In this paper we describe a variation of the classical permutation decoding algorithm that can be applied to any affine-invariant code with respect to certain type of information sets. In particular, we can apply it to the family of first-order Reed-Muller codes with respect to the information sets introduced in \cite{BS}. Using this algortihm we improve considerably the number of errors we can correct in comparison with the known results in this topic. 
\end{abstract}

\IEEEpeerreviewmaketitle

\section{Introduction}

Permutation decoding was introduced by F. J. MacWilliams in \cite{macwilliams} and it is fully described  in \cite{handbook2} and \cite{MacSlo}. With a fixed information set for a given linear code, this technique uses a special set of its permutation automorphisms called PD-set. Then, the idea of permutation decoding is to apply the elements of the  PD-set to the received vector until the errors are moved out of the fixed information set. 

The existence of PD-sets relies on the information set previously taken as reference. In fact, it may happen that for a fixed error correcting code the election of the information set causes the existence or not of a PD-set. Many authors have studied families of codes for which it is possible to develop methods to find PD-sets with respect to certain types of information sets. It has been seen that all methods rely heavily on those families chosen. In this paper, our interest is the family of Reed-Muller codes.

The family of Reed-Muller codes was introduced by D. E. Muller in 1954 \cite{Muller} and a specific decoding algorithm for them was presented by I. S. Reed in the same year \cite{Reed}. Since then, many authors have paid attention to this family of codes for many reasons. On the one hand, they can be implemented and decoded easily, and on the other hand they have a rich algebraic structure which allows us to see them from different points of view. In particular, the family of Reed-Muller codes is contained in the family of so-called affine-invariant codes and so a defining set can be defined for them (see, for instance, \cite{handbook2}). 

The problem of applying permutation decoding to Reed-Muller codes has been addressed for many authors earlier, see for instance \cite{KMM, KrollVic, Senev} or \cite{BarrVill}. In \cite{KMM}, the authors give a construction of information sets for generalized Reed-Muller codes. Then, in \cite{Senev} the author uses those information sets to apply the permutation decoding algorithm to first-order Reed-Muller codes; specifically, it is proven that one can find PD-sets for $R(1,m)$, with $m>4$, in order to correct up to $4$ errors. Later, in \cite{KMM2}, the authors extend the method given by P. Seneviratne to find $(m-1)$-PD-sets, for $m\geq 5$, and $(m+1)$-PD-sets for $m\geq 6$. Moreover, they show that the set of translations of $\F_{2}^m$ is an $s$-PD-set for $s=\min\left\{\left\lfloor\frac{2^m-1}{1+m} \right\rfloor, 2^{m-2}-1\right\}$. Finally, in \cite{KMM3}, the same authors, find $(M-1)$-PD-sets for any generalized Reed-Muller code $R(r,n)$, where $1\leq r\leq \frac{n-1}{2}$, provided that a $(n,M,2r+1)$ binary code exists. Furthermore, they construct $s$-PD-sets for $R(1,m)$ of size $s+1$, contained in the set of translations of $\F_2^m$ (this is the smallest size according to the Gordon-Schönheim bound (\cite{Gordon, Scho})) using the same ideas that were introduced in \cite{BarrVill}.
 
In this paper we define a variation of the classical permutation decoding algorithm that it is valid for any affine-invariant code with respect to a specific kind of information sets. In particular, it can be used successfully for first-order Reed-Muller codes, $R(1,m)$, with respect to the information sets introduced in \cite{BS}. Then, under some simple conditions, we show that this algorithm improves notably the number of errors we can correct in comparison with the known results (see Section \ref{Examples}).  

First we include a section devoted to describe the preliminaries, then we present the classical permutation decoding in the context of affine-invariant codes. In Section \ref{infosets} we recall the construction of information sets for first-order Reed-Muller codes given in \cite{BS}. In Sections \ref{NewPD} and \ref{PDRM} we include our main results. We introduce the new permutation-decoding algorithm
and we show how it can be applied to first-order Reed-Muller codes, respectively. Finally, in Section \ref{Examples}, we present some relevant examples that improve the number of errors that can be corrected.

\section{Preliminaries}\label{Preliminaries}

  In this paper we see the family of Reed-Muller codes as a subfamily of the so-called affine-invariant codes (see, for instance,\cite{handbook2},\cite{Hb2} or \cite{Charpin} ). To introduce this kind of codes we work in the ambient space $\F G$, the group algebra with $\F$ the binary field and $G$ the additive subgroup of the field with $2^m$ elements, where $m\in \N$ and $n=2^m-1$ are fixed all throughout this paper. Note that $G$ is an elementary abelian group of order $|G|=2^m$ and $G^*=G\setminus\{0\}$ is a cyclic group.
	
	From now on we also fix $\alpha\in G^*$ a generator element, that is, $\langle \alpha \rangle=G^*$ (note that $\alpha$ is a primitive $n$-th root of unity). Then we write the elements in $\F G$ as
	$$b X^0+\sum\limits_{i=0}^{n-1} a_i X^{\alpha^i}\quad (a_i,b\in \F)$$
	In this work a code is an ideal in $\F G$; we write $\C\subseteq \F G$.	
	
	To introduce the family of affine-invariant codes first we need to give the notion of extended cyclic code in $\F G$.
	
	\begin{definition}
 A code $\C\subseteq \F G$ is an extended cyclic code if for any $b X^0+\sum\limits_{i=0}^{n-1} a_i X^{\alpha^i}\in \C$ one has that $b X^0+\sum\limits_{i=0}^{n-1} a_i X^{\alpha^{(i+1)}}\in \C$ and $b +\sum\limits_{i=0}^{n-1} a_i =0$.
\end{definition}

For any extended cyclic code $\C\subseteq \F G$ we denote by $\C^*$ the punctured code at the position $X^0$. Then $\C^*$ is a \textit{cyclic} code in the sense that	it is the projection to $\F G^*$ of the image of a cyclic code via the map 
 \begin{eqnarray}\label{inmersionciclicos}
  \nonumber \F[X]/\langle X^n-1\rangle &\longrightarrow& \F G\\
   \sum\limits_{i=0}^{n-1}a_{i}X^i&\hookrightarrow& \left(-\sum\limits_{i=0}^{n-1} a_i\right)X^0+\sum\limits_{i=0}^{n-1}a_i X^{\alpha^i},	  
  \end{eqnarray}
	
where $\alpha$ is the fixed $n$-th root of unity and $\F[X]/\langle X^n-1\rangle$ is the quotient algebra of $\F[X]$, the polynomials with coefficients in $\F$.  
	
	On the other hand, we denote by $S_G$ the group of automorphisms of $G$. Then, we see $S_G$ acting on $\F G$ via 
		$$\tau\left(b X^0+\sum\limits_{i=0}^{n-1} a_i X^{\alpha^i}\right)=b X^{\tau(0)}+\sum\limits_{i=0}^{n-1} a_i X^{\tau(\alpha^i)}$$
	for any $\tau \in S_G$. 
	
	In this context, we define 
	$$\PAut(\C)=\{\tau\in S_{G}\mid \tau(\C)=\C\}$$
	
	Analogously, in the case of the cyclic punctured code $\C^*$, we consider $S_{G^*}$ the group of automorphisms of $G^*$, we see it acting on $\F G^*$, and we define $\PAut(\C^*)$. So, we can identify any $\tau\in S_{G^*}$ with the corresponding automorphism in $S_G$ that fixes the position $X^0$ and we may write $\PAut(\C^*)\subseteq \PAut(\C)$.

	Now, we may introduce the family of affine-invariant codes.	

\begin{definition}
	 We say that a code $\C\subseteq \F G$ is an affine-invariant code if it is an extended cyclic code and ${\rm GA}(G,G)\subseteq \PAut(\C)$ where
	$${\rm GA}(G,G)=\{x\mapsto ax+b\mid x\in\F G, a\in G^*, b\in G\}.$$
	\end{definition}

	Once we have introduced the ambient space and the general family of affine-invariant codes, we deal with the definition of Reed-Muller codes. 
	
For any $s\in\{0,\dots,n=2^m-1\}$ we consider the $\F$-linear map $\phi_s:\F G\rightarrow G$ given by
\begin{equation*}
 \phi_s\left(b X^0+\sum\limits_{i=0}^{n-1} a_i X^{\alpha^i}\right)=0^s+\sum\limits_{i=0}^{n-1} a_i \alpha^{is}
\end{equation*}
where we assume $0^0=1\in\F$ by convention.

\begin{definition}
 Let $\C\subseteq\F G$ be an affine-invariant code. The set 
 $$D(\C)=\{i\mid \phi_i(x)=0 \text{ for all } x\in \C\}$$
 is called the defining set of $\C$.
\end{definition}

It is well known that any affine-invariant code is totally determined by its defining set (see, for instance,\cite{handbook2}).

Finally, we also need to recall the notions of binary expansion and $2$-weight. 

\begin{definition}
For any natural number $k$ its binary expansion is the sum $\sum_{r\geq 0} k_r 2^r=k$ with $k_r\in \{0,1\}$. The $2$-weight or simply weight of $k$ is ${\rm wt}(k)=\sum_{r\geq 0}k_r$. 
\end{definition}

\begin{definition}\label{defRM}
 Let $0< \rho\leq m$. The Reed-Muller code of order $\rho$ and length $2^m$, denoted by $R(\rho,m)$, is the affine-invariant code in $\F G$ with defining set 
 $$D(R(\rho,m))=\{i\mid 0\leq i<2^m-1 \text{ \rm{and} } {\rm wt}(i)< m-\rho\}.$$
\end{definition}

In this paper we deal with first-order Reed-Muller codes, that is, the codes with defining set
$$D(R(1,m))=\{i\mid 0\leq i<2^m-1 \text{ \rm{and} } {\rm wt}(i)< m-1\}.$$
We denote by $R^*(1,m)$ the cyclic punctured code at the position $X^0$.

\begin{remarks}
 Note that some special values of $m$ give us trivial cases in which we do not have interest in the context of this paper. Specifically
  \begin{itemize}
	 \item $m=1=\rho$ gives the code $R(1,1)=\F G$, the entire group algebra.
	 \item $m=2$ gives us the code $R(1,2)$ which is the code of all even weight codewords in $\F G$.
 \end{itemize}
So, from now on we consider that $m>2$.

\end{remarks}

It is proven that the code $R(r,m)$ is a binary $[2^m,\sum\limits_{i=0}^r{m\choose i},2^{m-r}]$ code (see, for instance, \cite{handbook2}); so, the code $R(1,m)$ is a binary $[2^m,m+1,2^{m-1}]$ code. 
	
\section{Information sets from defining sets for first-order Reed-Muller codes}\label{infosets}

To begin this section let us establish the definition of information set in the ambient space $\F G$.

\begin{definition}
 A set $I\subseteq \{0,\alpha^0,\dots,\alpha^{n-1}\}=G$ is an information set for a code $\C\subseteq \F G$, with dimension $k$, if $|I|=k$ and the set of the projections of the codewords of $\C$ onto the positions in $I$ is equal to $\F^{|I|}$.
\end{definition}

For any affine-invariant code we may also define the notion of information set for $\C^*$ as a set contained in $G^*=\{\alpha^0,\dots, \alpha^{n-1}\}$. Clearly, an information set for $\C^*$ is an information set for $\C$, but if $I\subseteq G$ is an information set for $\C$ and $0\in I$ then $I'=I\setminus\{0\}$ is not an information set for $\C^*$.

In \cite{BS} we showed how to obtain an information set for $R(1,m)$ only in terms of its parameters $m$ and $n=2^m-1$. To achieve those results we need to impose the following restrictions:
\begin{equation}\label{condFirstOrder}
 n=r_1\cdot r_2,\qquad \gcd(r_1,r_2)=1 \qquad \text{ and } \qquad  r_1, r_2>1
\end{equation}

Then, from now on, we fix an arbitrary isomorphism $\varphi:\Z_n\longrightarrow \Z_{r_1}\times\Z_{r_2}$.

\begin{theorem}[\cite{BS}] \label{teoremainfosetprimerorden}
  Under the restrictions given above we have that the set $\{0,\alpha^i\mid i\in \varphi^{-1}\left(\Gamma\right)\}$ where 
 $$\Gamma=\Gamma(\C)=\left\{(i_1,i_2)\in\Z_{r_1}\times\Z_{r_2} \tq 0\leq i_1< a, 0\leq i_2<\frac{m}{a}\right\},$$
 is an information set for $R(1,m)$, with $a$ the order of $2$ modulo $r_1$ ($Ord_{r_1}(2)$). 
\end{theorem}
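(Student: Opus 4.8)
The plan is to reduce the statement to a purely field-theoretic fact about powers of $\alpha$, and then to establish that fact with a tower-of-fields argument adapted to the factorization $n=r_1r_2$. Identifying $G$ with the field of $2^m$ elements and writing $\mathrm{Tr}$ for the trace onto $\F$, I would first invoke the trace description of $R(1,m)$: every codeword has the form $c_{\beta,\epsilon}=\sum_{g\in G}\big(\mathrm{Tr}(\beta g)+\epsilon\big)X^{g}$ with $\beta\in G$ and $\epsilon\in\F$. (This is the standard affine-function model of $R(1,m)$; its punctured version is the irreducible cyclic code attached to the single cyclotomic coset of frequencies of weight $m-1$, and the extension adds the constant $\epsilon$.) Since $|\Gamma|=a\cdot\frac{m}{a}=m$, the candidate set $I$ has $m+1$ elements, exactly $\dim R(1,m)$, so it suffices to show that the projection of $R(1,m)$ onto $I$ is injective, i.e. that $c_{\beta,\epsilon}$ vanishes on $I$ only when $\beta=0$ and $\epsilon=0$. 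The coordinate $0\in I$ forces $\epsilon=\mathrm{Tr}(0)+\epsilon=0$ immediately, and the remaining coordinates give $\mathrm{Tr}(\beta\alpha^{i})=0$ for all $i\in\varphi^{-1}(\Gamma)$. By non-degeneracy of the trace form, this implies $\beta=0$ precisely when $B:=\{\alpha^{i}\mid i\in\varphi^{-1}(\Gamma)\}$ spans $G$ over $\F$; as $|B|=m=\dim_{\F}G$, this is the same as $B$ being an $\F$-basis of $G$.

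Next I would exploit the multiplicative structure. Because $\varphi^{-1}$ is a group isomorphism, $\varphi^{-1}(i_1,i_2)=i_1e_1+i_2e_2$ in $\Z_n$, where $e_1=\varphi^{-1}(1,0)$ and $e_2=\varphi^{-1}(0,1)$; writing $u=\alpha^{e_1}$ and $v=\alpha^{e_2}$ turns $B$ into the product set $B=\{u^{i_1}v^{i_2}\mid 0\le i_1<a,\ 0\le i_2<\tfrac{m}{a}\}$. Since $(1,0)$ and $(0,1)$ have orders $r_1$ and $r_2$ in $\Z_{r_1}\times\Z_{r_2}$, the elements $u$ and $v$ have multiplicative orders $r_1$ and $r_2$. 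Consequently $\F(u)$ is the subfield of $G$ of order $2^{a}$, because $a=Ord_{r_1}(2)=[\F(u):\F]$, and therefore $\{1,u,\dots,u^{a-1}\}$ is an $\F$-basis of $\F(u)$.

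Finally I would run the tower $\F\subseteq\F(u)\subseteq G$. Put $b=Ord_{r_2}(2)$, so $\F(v)$ has order $2^{b}$; since $\alpha$ is a primitive $n$-th root of unity the order of $2$ modulo $n$ is $m$, and by the coprime factorization this order equals $\lcm(a,b)$, whence $\lcm(a,b)=m$. Hence the compositum $\F(u)(v)=\F(u)\cdot\F(v)$ is the subfield of order $2^{\lcm(a,b)}=2^{m}$, that is $G$, and $[G:\F(u)]=\tfrac{m}{a}$; so the minimal polynomial of $v$ over $\F(u)$ has degree $\tfrac{m}{a}$ and $\{1,v,\dots,v^{m/a-1}\}$ is an $\F(u)$-basis of $G$. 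Combining the $\F$-basis $\{u^{i_1}\}$ of $\F(u)$ with the $\F(u)$-basis $\{v^{i_2}\}$ of $G$, the products $\{u^{i_1}v^{i_2}\}=B$ form an $\F$-basis of $G$, which is exactly what the reduction required. The main obstacle is this last step: one must verify that $\lcm(a,b)=m$ forces $v$ to have degree exactly $\tfrac{m}{a}$ over $\F(u)$, so that the $\tfrac{m}{a}$ chosen powers of $v$ are $\F(u)$-independent and the product set attains full size $m$; the rest is the routine assembly of the tower basis together with the non-degeneracy of the trace.
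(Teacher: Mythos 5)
Your proposal is correct, but note that this paper does not actually prove Theorem \ref{teoremainfosetprimerorden}: it is imported verbatim from \cite{BS}, so the comparison is with that reference's method rather than with anything in the present text. Your route is genuinely different and considerably more self-contained. You exploit the special structure of first-order codes: the identification of $R(1,m)$ with the evaluation code of affine functions $g\mapsto \mathrm{Tr}(\beta g)+\epsilon$ (this classical equivalence with the defining-set description is the one external fact you lean on), which reduces the statement to showing that $B=\{\alpha^{i}\mid i\in\varphi^{-1}(\Gamma)\}$ is an $\F$-basis of the field $G$; the coordinate $0$ handles the constant $\epsilon$, and non-degeneracy of the trace form does the rest. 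Your field-theoretic argument is sound: $u=\alpha^{\varphi^{-1}(1,0)}$ and $v=\alpha^{\varphi^{-1}(0,1)}$ have orders $r_1$ and $r_2$, so $[\F(u):\F]=a$; the identity $\mathrm{lcm}(a,b)=m$ (the same computation as in Lemma \ref{descomposicion2} of this paper) forces the compositum $\F(u)(v)$ to be all of $G$, hence $v$ has degree exactly $m/a$ over $\F(u)$, and the tower basis $\{u^{i_1}v^{i_2}\}$ is precisely $B$. By contrast, the method of \cite{BS} does not use any dual or trace description of the code: it works for general affine-invariant (indeed abelian) codes by transporting the code through the isomorphism $\F G^*\cong \F(\Z_{r_1}\times\Z_{r_2})$ and extracting an information set directly from the defining set via counting arguments on cyclotomic cosets, which is what produces the rectangle $\Gamma$. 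What your approach buys is elementarity and a short, self-contained proof; what the approach of \cite{BS} buys is generality, since it also covers second-order Reed-Muller codes and other affine-invariant codes, for which no affine-function model as simple as $\mathrm{Tr}(\beta g)+\epsilon$ is available. One small point worth making explicit in your write-up: $a\mid m$ (so that $m/a\in\N$ and $|\Gamma|=m$) because $r_1\mid 2^m-1$ gives $2^m\equiv 1\ (\mathrm{mod}\ r_1)$, hence $Ord_{r_1}(2)\mid m$.
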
 

Note that, as we have seen before, the set $I'=I\setminus\{0\}$ is not an information set for $R^*(1,m)$.

\medskip

To finish this section, we include Table I which shows that conditions (\ref{condFirstOrder}) are not too restrictive, specifically we include the suitable values of $m$ up to length 2048. The values $m=3,5,7$ yield a prime number for $n=2^m-1$. The columns $\mathbf{t}$ and $\mathbf{a}$ represent the error-correction capability and $Ord_{r_1}(2)$ respectively.

\begin{table}[h]
\begin{center}
\caption{Parameters for first order RM codes up to length 2048}
\end{center}

\begin{center}
\begin{tabular}{|c|c|c|c|c|}
 \hline
 $\mathbf{m}$&$\mathbf{n}$&$\mathbf{r_1}$&$\mathbf{r_2}$& $\mathbf{a}$\\ \hline
 4&5&3&5&2\\ \hline
 6&63&7&9&3\\ \hline
 8&255&3&85&2\\ \hline
 8&255&15&17&4\\ \hline
 8&255&5&51&4\\ \hline
 9&511&7&73&3\\ \hline
 10&1023&3&341&2\\ \hline
 10&1023&11&93&10\\ \hline
 10&1023&31&33&5\\ \hline
 11&2047&23&89&11\\ \hline
\end{tabular}
\end{center}

\end{table}

\section{Permutation decoding for affine-invariant codes}\label{NewPD}

  \subsection{The Permutation decoding algorithm in $\F G$}
	
	Let $\C\subseteq \F G$ be a $t$-error correcting code with dimension $k$ and let $I\subseteq G$ be an information set for $\C$. We say that a matrix $G_{k\times n}$ with coefficients in $\F$ is a generator matrix whether its rows form a basis for $\C$ as $\F$-vector space. As usual a parity check matrix is a generator matrix for the dual code of $\C$ (defined as usual). Moreover, we say that a parity check matrix is in standard form with respect to $I$ if the columns corresponding to the positions not in $I$ are the columns of the identity matrix of order $n-k$.
	
	As we have said in the introduction, the permutation decoding algorithm is based on the existence of certain special subsets in $\PAut(\C)$ called PD-sets. Let us give that definition.
	
	\begin{definition}
	 We say that $P\subseteq \PAut(\C)$ is an $s$-PD-set ($s\leq t$) with respect to $I$ if any $s$ positions in $G=\{0,1,\alpha,\dots,\alpha^{n-1}\}$ are moved out of $I$ by at least one element in $P$. 
	
	In case $s=t$ we say that $P$ is a PD-set. In case $s<t$ we talk about a partial PD-set.
	
	\end{definition}
	
	Now, let $c\in \C$ be the message and $r=c+e$ the received word, where $e$ represents the error vector. For any $v\in \F G$ we denote its weight as $\omega(v)$, that is, the number of coefficients (as an element in $\F G$) different from zero. We assume that $\omega(e)\leq t$. The symbols of $r$ in the positions of $I$ are called its information symbols.

	The following well known result is essential to define the algorithm.
	
	\begin{theorem}(\cite{MacSlo})\label{PD}
	 Let $H$ be a parity check matrix for $\C$ in standard form with respect to an information set $I$. Then, the information symbols of $r$ are correct if and only if $\omega(H\cdot r^T)\leq t$, the error correction capability of $\C$.
	\end{theorem}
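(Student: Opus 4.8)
The plan is to reduce everything to the syndrome of the error vector and then exploit the standard form of $H$ together with the minimum distance of $\C$. Since $c\in\C$ and $H$ is a parity check matrix, $H\cdot c^T=0$, so that $H\cdot r^T=H\cdot e^T$; that is, the syndrome of the received word depends only on the error. Writing $G\setminus I$ for the positions outside the information set, the standard-form hypothesis says that, after ordering the coordinates as $(I,\,G\setminus I)$, the matrix $H$ acquires the block shape $H=(A\mid \mathrm{Id})$, where the identity block indexes the positions of $G\setminus I$. Splitting $e=(e_I,\,e_{G\setminus I})$ accordingly yields $H\cdot e^T=A\,e_I^T+e_{G\setminus I}^T$.

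For the direct implication I would observe that if the information symbols of $r$ are correct then $e_I=0$, whence $H\cdot r^T=H\cdot e^T=e_{G\setminus I}^T$ and therefore $\omega(H\cdot r^T)=\omega(e_{G\setminus I})\le \omega(e)\le t$.

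The converse is where the real content lies, and the idea is to realize the given syndrome by an explicit low-weight vector supported off $I$. Put $s=H\cdot r^T=H\cdot e^T$ and let $f\in\F G$ be the vector whose coordinates on $G\setminus I$ are the entries of $s$ and whose coordinates on $I$ are all zero. Thanks to the identity block one gets $H\cdot f^T=s=H\cdot e^T$, so $e-f$ lies in the kernel of $H$, i.e.\ $e-f\in\C$. By hypothesis $\omega(f)=\omega(s)=\omega(H\cdot r^T)\le t$, and $\omega(e)\le t$, so $\omega(e-f)\le 2t$. Since $\C$ corrects $t$ errors its minimum distance $d$ satisfies $d\ge 2t+1$, and the only codeword of weight strictly less than $d$ is $0$; hence $e-f=0$, i.e.\ $e=f$. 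As $f$ vanishes on $I$ by construction, so does $e$, which is exactly the statement that the information symbols of $r$ are correct.

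I expect the main obstacle to be this last passage from ``small syndrome weight'' to ``correct information symbols'': the key manoeuvre is the construction of $f$, which turns the syndrome into a genuine error pattern off $I$ and lets the bound $\omega(e-f)\le 2t<d$ collapse $e-f$ to the zero codeword. Everything else---the syndrome identity $H\cdot r^T=H\cdot e^T$ and the block bookkeeping---is routine once the coordinates are arranged as $(I,\,G\setminus I)$.
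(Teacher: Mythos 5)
Your proof is correct. Note that the paper itself offers no argument for this statement---it is quoted from \cite{MacSlo} as a known result---so the only meaningful comparison is with the classical proof, and yours is exactly that: the syndrome identity $H\cdot r^T=H\cdot e^T$, the block decomposition $H=(A\mid \mathrm{Id})$ coming from the standard form, and, for the converse, the construction of the vector $f$ supported off $I$ with $H\cdot f^T=H\cdot e^T$, so that $e-f\in\C$ has weight at most $2t<d$ and must vanish. All steps, including the use of $\ker H=\C$ and of the bound $d\geq 2t+1$ implied by $t$-error correction, are sound.
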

	
	Then, the permutation decoding algorithm works as follows.
 
 \vspace{.4cm}

 \begin{quote}	
 {\scshape Algorithm I:}
	
	\begin{enumerate}
		\item Fix $I$ an information set for $\C$, a parity check matrix $H$ in standard form with respect to $I$ and $P\subseteq \PAut(\C)$ an $s$-PD-set ($s\leq t$) for $\C$ with respect to $I$.
		\item For each $\tau\in P$ we compute $\tau(r)$ and $H\cdot \tau(r)^T$ until we obtain $\tau_0$ such that $\omega(H\cdot\tau_0(r)^T)\leq t$. (From the definition of $s$-PD-set such an automorphism $\tau_0$ must exist.)
		\item We recover $c'\in \C$ from the information symbols of $\tau_0(r)$.
		\item We decode to $\tau^{-1}(c')=c$.
	\end{enumerate}

 \end{quote}

\medskip

By definition of partial $s$-PD-set and Theorem \ref{PD} we see that we will always be able to correct up to $s$ errors by using {\scshape Algorithm I}.
	
	\subsection{Permutation decoding for affine-invariant codes}
	
	In this section we present a modification of the classical PD-algorithm ({\scshape Algorithm I}) valid for any affine-invariant code. First, we introduce another type of subsets in $\PAut(\C)$ similar to the previous $s$-PD-sets.
	
	As above let $\C\subseteq \F G$ be an affine-invariant code and let $\C^*$ be the corresponding punctured cyclic code.
	
\begin{definition}
 Let $J\subseteq G^*=\{\alpha^i\mid i=0,\dots,n-1\}$ be an arbitrary set. 
 A set $P\subseteq \PAut(\C^*)\subseteq S_{G^*}$ is a $s$-PD-like set for  $\C$ with respect to $J$ if any $s$ positions in $G^*$ are moved out of $J$ by at least one element in $P$. In case $s=t$ we say that $P$ is a PD-like set. 
\end{definition}

Observe that if the set $J$ is an information set for $\C^*$ we are talking about PD-sets of $\C^*$ and consequently PD-sets of $\C$. (As we have noted before we can see $\PAut(\C^*)\subseteq \PAut(\C)$.)
\medskip

Now, let $I\subseteq \{0,\alpha^0,\dots,\alpha^{n-1}\}$ be an information set for $\C$. Let $P\subseteq \PAut(\C^*)\subseteq \PAut(\C)$ a $s$-PD-like set for $\C$ with respect to the set 
\begin{equation}\label{Iprima}
I'=\left\{
\begin{array}{lcl}
 I & \text{  if  } & 0\notin I\\
 I\setminus\{0\} &  & \text{otherwise}
\end{array}
\right.
\end{equation} 
(Note that if $0\in I$ then $I'$ is not an information set for $\C^*$). Then, if $0\notin I$ then $P$ actually acts as an $s$-PD-set for $\C$ with respect to $I$ so the algorithm is exactly the same as it was described in the previous section.

We also need to define the following automorphisms in $\PAut(\C)$.

\begin{definition}
  For any $k=0,1,\dots,n-1$ let $\sigma_k$ the automorphism of $G$ given by 
	
	$$\sigma_k\left(b X^0+\sum\limits_{i=0}^{n-1}a_i X^{\alpha^i}\right)=b X^{\alpha^k}+\sum\limits_{i=0}^{n-1}a_i X^{(\alpha^i+\alpha^k)} $$
\end{definition}

By definition of affine-invariant code it is clear that $\sigma_k\in \PAut(\C)$ for any $k=0,\dots, n-1$. We denote by $\Sigma$ the ordered set $\{1_G,\sigma_0,...\sigma_{n-1}\}$, where $1_G$ represents the identity automorphism in $\F G$. 

As above, let $r=c+e$ be the received word with $c\in \C$ and $e$ the error vector, where we assume $\omega(e)\leq t$. Then, the new algorithm works as follows:

\begin{quote}	
 {\scshape Algorithm II:}
	
	\begin{enumerate}
		\item Let I be an information set, H a parity check matrix in
standard form with respect to I, and $P\subseteq \PAut(\C^*)\subseteq\PAut(\C)$ a $s$-PD-like set for $\C^*$ ($s\leq t$) with respect to $I'$, defined in (\ref{Iprima}).

	\item We take $r=1_G(r)$. For each $\tau\in P$ we compute $\tau(r)$. If we find $\tau_0\in P$ such that $\omega\left(H\cdot \tau_0(r)^T\right)\leq t$ then the symbols of $\tau_0(r)$ are correct.
	
	In case there isn't any $\tau_0\in P$ satisfying the desired condition we go to Step 3, otherwise we go to Step 4.
	
	\item We compute $\sigma_i(r)$, where $\sigma_i$ is the next element in $\Sigma$, and we repeat Step 2 starting with $\sigma_i(r)$ instead of $r$.
	
		\item We recover $c'\in \C$ from the information symbols of $\tau_0(r)$.
		\item We decode to $\tau_0^{-1}(c')=c$.
	
\end{enumerate}

\end{quote}

Then, we have the following result:

\begin{theorem}\label{PDnew}
 Let $\C\subseteq \F G$ be an affine-invariant code with correction capability $t$. Let $I\subseteq \{0,\alpha^0,\dots,\alpha^{n-1}\}$ be an information set for $\C$. Let $P\subseteq \PAut(\C^*)\subseteq\PAut(\C)$ be a $s$-PD-like set for $\C^*$ with respect to $I'$ where $s\leq t$. Then we can correct up to $s$ errors by using Algorithm II.
\end{theorem}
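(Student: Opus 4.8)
The plan is to show that Algorithm II always halts after producing a word whose symbols on $I$ coincide with those of the transmitted codeword, whence correctness follows immediately from Theorem \ref{PD}: the syndrome weight $\omega(H\cdot(\,\cdot\,)^T)$ certifies exactly when the information symbols are correct. Write $e$ for the error vector, so $\omega(e)\le s$, and split the argument according to whether $0\in I$.

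\textbf{The case $0\notin I$.} Here $I'=I$ by (\ref{Iprima}). Any error sitting at the coordinate $X^0$ is already outside $I$, and since every $\tau\in P\subseteq S_{G^*}$ fixes $X^0$ and permutes $G^*$, such an error is never disturbed. The remaining error positions all lie in $G^*$ and number at most $s$, so the $s$-PD-like property of $P$ furnishes some $\tau_0$ carrying them out of $I'=I$; by Theorem \ref{PD} the test already succeeds at Step 2 and we are done.

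\textbf{The case $0\in I$.} Now $I'=I\setminus\{0\}$, and the key observation is that, because $P$ keeps every coordinate inside $G^*$ while $0\notin G^*$, a position of $G^*$ lies outside $I'$ if and only if it lies outside $I$; that is, $G^*\setminus I'=G^*\setminus I$. Consequently, whenever the current error vector is supported entirely in $G^*$ and has weight at most $s$, the $s$-PD-like property produces $\tau_0\in P$ moving its whole support into $G\setminus I$, so that the test of Theorem \ref{PD} passes. The sole obstruction is therefore an error at $X^0$, which $P$ cannot remove; this is precisely what the translations $\sigma_i$ are designed to fix. To clear $X^0$ I would compute the support of $\sigma_i(e)$: as $\sigma_i$ acts on coordinates by $p\mapsto p+\alpha^i$, this support is the translate $\{p+\alpha^i\mid p\in\supp(e)\}$, which contains $0$ exactly when $\alpha^i\in\supp(e)$; hence $\sigma_i(e)$ is supported in $G^*$ if and only if $\alpha^i\notin\supp(e)$. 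Since $\omega(e)\le s\le t<n=|G^*|$, at least one index $i$ avoids $\supp(e)$, and scanning $\Sigma$ in order the algorithm reaches such an $i$ after finitely many steps. For that $\sigma_i$ the vector $\sigma_i(e)$ has weight at most $s$ and lies wholly in $G^*$, so by the preceding remark some $\tau_0\in P$ passes the Step 2 test on $\sigma_i(r)$.

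It remains to address correctness and termination. No earlier candidate can halt the scan by a false positive, because Theorem \ref{PD} is an equivalence: a surviving error inside $I$ (in particular an uncleared error at $X^0$) forces $\omega(H\cdot(\,\cdot\,)^T)>t$. When the test does pass, the same theorem guarantees that the symbols of $\tau_0(\sigma_i(r))$ on $I$ are those of $\tau_0(\sigma_i(c))$, so the recovered $c'$ equals $\tau_0(\sigma_i(c))\in\C$ (using $\sigma_i,\tau_0\in\PAut(\C)$), and the known inverse $\sigma_i^{-1}\tau_0^{-1}$ returns $c$. Termination is immediate since $\Sigma$ and $P$ are finite and a successful pair $(\sigma_i,\tau_0)$ has been exhibited. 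I expect the only delicate point to be the bookkeeping of this last case, namely verifying exactly when a translation removes the error at $X^0$ without reintroducing one there, together with the reduction $G^*\setminus I'=G^*\setminus I$ that lets the $s$-PD-like set act as an honest $s$-PD-set on the translated word.
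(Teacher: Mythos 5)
Your proof is correct and follows essentially the same strategy as the paper's: split on whether $0\in I$, use the $s$-PD-like property to clear errors supported in $G^*$ out of $I'$, and use the translations $\sigma_i$ to remove an error at $X^0$, with Theorem \ref{PD} certifying success. You actually fill in details the paper leaves implicit (the equivalence in Theorem \ref{PD} ruling out false positives, the characterization $\alpha^i\in\supp(e)$ of failing translations, and the composed inverse $\sigma_i^{-1}\tau_0^{-1}$); the only point you omit is the paper's sharper observation that at most $s$ elements of $\Sigma$ can fail, so the scan ends within $s$ phases, but that bound is not needed for the statement itself.
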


\begin{proof}First, if $0\notin I (I'=I)$ then the statement is proven by the classical result for the permutation decoding algorithm because $P$ is actually an $s$-PD-set. In fact, we only need to use the element $1_G\in \Sigma$.

If $0\in I$ the existence of $\tau_0$ is given by the definition of $s$-PD-like set and by the assumption $\omega(e)\leq s\leq t$. Observe that since $\omega(e)\leq s$ we need to repeat Step 3) at most $s$ times in order to get the condition $0\notin \supp(e)$, so we decode at most in $s$ phases.

This finishes the proof.
\end{proof}

	By the previous theorem we always may correct up to $s$ errors in an affine-invariant code with $s\leq t$ provided that we find a $s$-PD-like set in $\C^*$ with respect to the set $I'$ associated to an information set $I$. 
	
	Therefore, the success of the new algorithm for a family of codes depends on whether we are able to describe suitable information sets in order to find PD-like sets with respect to them. In the following section, we are exhibiting the other main results of this paper, namely, we will prove that we can apply the algorithm for first-order Reed-Muller codes by using the information sets given in Section \ref{infosets}. 
	
	\begin{remark}[\textbf{Complexity}]
This new algorithm only differs from the original one in the inclusion of step $3$, that is, we repeat the action of the $s$-PD-like set several times. Given a code of length $n$ and correction capability $t\geq s$, although the set $\Sigma$ has $n$-elements in the worst-case we need to use exactly $s$ of them in order to move the support of the error vector out of position $0$. Therefore, since the worst-case time complexity for the classical permutation decoding algorithm is $\mathcal{O}(nkz)$, where $k$ is the dimension of the code and $z$ is the cardinality of the $s$-PD-set (see, for instance \cite{KMM3}), for the new algorithm the worst-case time complexity is $\mathcal{O}(snkz)$. 
	\end{remark}

\section{Permutation decoding for First-order Reed-Muller codes}\label{PDRM}

Let $\C=R(1,m)$ and suppose that $n=2^m-1$ satisfies the conditions (\ref{condFirstOrder}). Recall that we fixed an isomorphism $\varphi: \Z_n\longrightarrow \Z_{r_1}\times\Z_{r_2}$ and we consider the information set for $\C$ given in Section \ref{infosets}, that is
$$I=\{0,\alpha^i\mid i\in \varphi^{-1}(\Gamma)\}$$
where $\Gamma=\{(i_1,i_2)\in \Z_{r_1}\times\Z_{r_2}\mid 0\leq i_1\leq a, 0\leq i_2<\frac{m}{a}\}$, $a=Ord_{r_1}(2)$.

This set depends on the parameters $m, a$ which in turn depend on the decomposition $n=r_1\cdot r_2$. We are interested in select a decomposition which gives us a suitable information set to get our purposes, that is, to apply the new permutation decoding algorithm efficiently. To make the mentioned selection of the parameters we need the following technical lemmas. 

\begin{lemma}\label{descomposicion1}
 Let $m,\delta\in \N$. Then $m|\delta$ if and only if $2^m-1|2^\delta-1$.  
\end{lemma}
\begin{proof}
  First, assume that $m|\delta$. Then, by using the well-known formula
	$$X^k-1=(X-1)\cdot \sum\limits_{i=1}^{k}X^{k-i}$$
	with $X=2^m$ and $k=\delta/m$ we obtain that $2^m-1|2^\delta-1$ as we wanted.
	
	Now, assume that $2^m-1|2^\delta-1$. Then, $Ord_{2^m-1}(2)=m$ and $2^\delta\equiv 1 (\text{mod } 2^m-1)$ so $m|\delta$.
\end{proof}

\begin{lemma}\label{descomposicion2}
 Suppose that $2^m-1=r_1\cdot r_2$ with $\gcd(r_1,r_2)=1, r_1,r_2>1$. Then either $Ord_{r_1}(2)=m$ or $Ord_{r_2}(2)=m$.
\end{lemma}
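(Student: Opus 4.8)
The plan is to exploit Lemma~\ref{descomposicion1}, which characterizes divisibility $2^m-1 \mid 2^\delta-1$ in terms of $m \mid \delta$, together with the structure of the unit group modulo $r_1$ and $r_2$. Writing $a = Ord_{r_1}(2)$ and $b = Ord_{r_2}(2)$, the goal is to show that $\max\{a,b\} = m$. I would begin by recording the two basic facts I intend to combine: first, since $r_1 \mid 2^m-1$ we have $2^m \equiv 1 \pmod{r_1}$, hence $a \mid m$; symmetrically $b \mid m$. In particular both $a$ and $b$ are at most $m$, so it suffices to prove that at least one of them \emph{equals} $m$.

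The key step is to translate the order of $2$ modulo the product $r_1 r_2 = 2^m-1$ into information about $a$ and $b$ via the Chinese Remainder Theorem. Because $\gcd(r_1,r_2)=1$, the ring isomorphism $\Z_{r_1 r_2} \cong \Z_{r_1} \times \Z_{r_2}$ identifies the image of $2$ with the pair $(2 \bmod r_1, 2 \bmod r_2)$, and the order of an element in a direct product is the least common multiple of the orders of its components. Thus
\begin{equation*}
Ord_{2^m-1}(2) = Ord_{r_1 r_2}(2) = \lcm(a,b).
\end{equation*}
On the other hand, applying Lemma~\ref{descomposicion1} with $\delta = Ord_{2^m-1}(2)$ (which by definition satisfies $2^m-1 \mid 2^{\delta}-1$) gives $m \mid \delta$, while $\delta \le m$ forces $\delta = m$; that is, $Ord_{2^m-1}(2) = m$. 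Combining the two displays yields $\lcm(a,b) = m$.

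It remains to deduce from $\lcm(a,b) = m$, together with $a \mid m$ and $b \mid m$, that one of $a, b$ is already $m$. I expect this to be the main (though mild) obstacle, since $\lcm(a,b)=m$ does not in general force a single divisor to attain $m$ --- for instance $\lcm(2,3)=6$ with neither factor equal to $6$. The resolution must use an arithmetic feature special to this setting, not merely the lcm identity. The natural candidate is to argue directly from the subfield structure: if both $a < m$ and $b < m$, then (again by Lemma~\ref{descomposicion1}, read the other way) $2^a - 1$ and $2^b-1$ are proper factors of $2^m-1$, and one shows that $r_1 \mid 2^a-1$ and $r_2 \mid 2^b-1$, so that $r_1 r_2 = 2^m-1$ would have to divide the product $(2^a-1)(2^b-1)$, which is too small when $a,b < m$ unless the factorization collapses. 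I would therefore close the argument by comparing sizes or by invoking a primitive-prime-divisor/cyclotomic argument to rule out $a,b < m$ simultaneously; verifying that this size or divisibility comparison is genuinely contradictory under the hypotheses $r_1, r_2 > 1$ is the delicate point of the proof.
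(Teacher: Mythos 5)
Your setup coincides with the paper's own proof: you record $a\mid m$ and $b\mid m$, you establish $\lcm(a,b)=m$ (the paper gets this by noting $r_1r_2=2^m-1\mid 2^{\lcm(a,b)}-1$ and invoking Lemma~\ref{descomposicion1}; your CRT computation is equivalent), and you point at the correct ingredients for the endgame, namely $r_1\mid 2^a-1$, $r_2\mid 2^b-1$, hence $2^m-1=r_1r_2\mid (2^a-1)(2^b-1)$. But the proposal stops exactly where the real work begins: the derivation of a contradiction from these facts is left as something you ``would'' do ``by comparing sizes or by invoking a primitive-prime-divisor/cyclotomic argument,'' and you yourself flag it as the delicate point. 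That is a genuine gap, and the heuristic you offer for it is also imprecise: $(2^a-1)(2^b-1)$ is \emph{not} too small when $a,b<m$ in general (for instance $a=4$, $b=5$, $m=6$ gives $15\cdot 31=465>63$), so size alone cannot finish the argument; the constraint $\lcm(a,b)=m$ must enter the final step, not just the reduction to it.

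What the divisibility actually yields is $2^m-1\le (2^a-1)(2^b-1)=2^{a+b}-2^a-2^b+1<2^{a+b}-1$, i.e.\ $m<a+b$, and the contradiction appears only when this inequality is played against $m=\lcm(a,b)$. This is how the paper closes it: with $d=\gcd(a,b)$ and (w.l.o.g.) $a\le b$, the case $a=b$ gives $a=b=\lcm(a,b)=m$ at once, while if $a<b$ the relation $m=ab/d<a+b$ rewrites as $(a/d-1)b<a<b$, forcing $a/d=1$, i.e.\ $a\mid b$, whence $b=\lcm(a,b)=m$. Equivalently, in your ``rule out $a,b<m$ simultaneously'' formulation: write $a=da'$, $b=db'$ with $\gcd(a',b')=1$; if neither of $a,b$ divides the other then $a',b'\ge 2$, so $a'b'\ge a'+b'$ and hence $m=da'b'\ge d(a'+b')=a+b$, contradicting $m<a+b$. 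This two-line computation is precisely what is missing from your proposal. As for your alternative suggestion, a primitive-prime-divisor (Zsigmondy) argument does work---a prime $p\mid 2^m-1$ with $p\nmid 2^k-1$ for all $k<m$ must divide $r_1$ or $r_2$, and then $m=Ord_p(2)$ divides the corresponding order, forcing it to equal $m$---but Zsigmondy's theorem has an exception precisely at $m=6$, which is one of the cases the paper actually uses in its tables, so that route would require a separate verification there; the paper's elementary inequality handles all $m$ uniformly.
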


\begin{proof}
 Let us denote $Ord_{r_1}(2)=a$ and $Ord_{r_2}(2)=b$.
 Firstly, since $2^m\equiv 1 (\text{mod } r_i)$ for $i=1,2$ we have that $a,b|m$. Then $\mu=lcm(a,b)|m$. On the other hand, $2^\mu\equiv 1 (\text{mod } r_i), i=1,2$ implies $r_1,r_2| 2^\mu-1$, and since $r_1,r_2$ are coprime one has $2^m-1=r_1\cdot r_2|2^\mu-1$. By the previous lemma we have that $m\leq \mu$, and so $m=\mu$.

Now, assume w.l.o.g. that $a\leq b$. In case $a=b$ we are done because this implies $a=b=\mu=m$. So, let us suppose that $a<b$; we are proving that $b=m$.

Since $r_1|2^a-1$ and $r_2|2^b-1$ then $r_1\cdot r_2|(2^a-1)\cdot (2^b-1)$ which yields
$$2^m-1=r_1\cdot r_2\leq (2^a-1)\cdot (2^b-1)=2^{a+b}+1-(2^a+2^b)<2^{a+b}-1,$$
note that $a, b\neq 0$ by hypothesis and then $2^a+2^b>2$. So, we have that $m<a+b$.

Let $d=gcd(a,b)$. It is well known that $m=\frac{ab}{d}$. Hence
$$\left(\frac{a}{d}\right)b<a+b\rightarrow \left(\frac{a}{d}-1\right)b<a$$
but since $a<b$, we conclude that $a=d$ and therefore $b=m$.
\end{proof}

By the previous lemmas we will always be able to select the decomposition $n=r_1\cdot r_2$ in such a way that $Ord_{r_1}(2)=m$. So, from now on we fixed the information set for $\C=R(1,m)$
\begin{equation}\label{conjinf}
 	I=\{0,\alpha^i\mid i\in \varphi^{-1}(\Gamma)\},
\end{equation}
where
$$\Gamma=\{(i_1,i_2)\in \Z_{r_1}\times\Z_{r_2}\mid 0\leq i_1\leq m, 0\leq i_2<1\}. $$

\vspace{.5cm}

Now, let $T_\alpha\in S_{G^*}$ be defined by 
 \begin{equation}\label{talfa}
T_\alpha\left(b X^0+\sum_{i=0}^{n-1} a_i X^{\alpha^i}\right)=b X^0+\sum_{i=0}^{n-1} a_i X^{\alpha^{i+1}}\end{equation}
Then $T_\alpha \in \PAut(\C^*)\subseteq\PAut(\C)$ because $\C$ is an extended cyclic code. Moreover, since $S_{G^*}\simeq S_{\Z_{r_1}\times\Z_{r_2}}$ (the group of automorphisms of $\Z_{r_1}\times\Z_{r_2}$) we have an isomorphism (induced by $\varphi$)
$$\phi:\langle T_\alpha \rangle\rightarrow \langle T_1,T_2\rangle $$
where $T_1(x,y)=(\overline{x+1},y)$ for any $(x,y)\in\Z_{r_1}\times\Z_{r_2}$, and $T_2$ is defined analogously.

We are proving that $\langle T_\alpha\rangle$ is a $s$-PD-like set for $\C^*$ with respect to $I$ for some natural number $s$. First we need the following technical lemma proved in \cite{BS2}. We denote by $[\cdot]_r$ the remainder modulo $r$.

\begin{lemma}\label{juntarpuntos}
 Let $r, x_1,\dots,x_h\in \N$ where $0\leq x_1<x_2<\dots<x_h<r$. Then there exists $\mu\in\N$ such that 
$$\left\lceil \frac{r}{h}\right\rceil-1\leq [x_i+\mu]_r<r$$
for all $i=1,\dots,h$ and $[x_j+\mu]_r=r-1$ for some $j\in \{1,\dots,h\}$.
\end{lemma}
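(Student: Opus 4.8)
The plan is to regard $x_1,\dots,x_h$ as $h$ distinct points placed on the cyclic group $\Z_r=\Z/r\Z$ and to realize the sought $\mu$ as a single rotation of this circle. First I would record the $h$ \emph{forward gaps} between consecutive points, namely $d_i=x_{i+1}-x_i$ for $1\le i\le h-1$ together with the wrap-around gap $d_h=r-x_h+x_1$; each $d_i\ge 1$ since the points are distinct, and clearly $\sum_{i=1}^h d_i=r$. The whole configuration $\{x_1,\dots,x_h\}$ then sits in the complement of the largest of these gaps.

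The crucial ingredient is a pigeonhole estimate on the gaps. Since the $h$ gaps sum to $r$, at least one of them, say $d_j=g$, satisfies $g\ge\left\lceil\frac{r}{h}\right\rceil$: if every gap were at most $\left\lceil\frac{r}{h}\right\rceil-1$, their sum would be at most $h\left\lceil\frac{r}{h}\right\rceil-h<r$, a contradiction. Let $x_j$ be the point immediately preceding this maximal gap (so the next point, cyclically, sits at $x_j+g$). Then all of $x_1,\dots,x_h$ lie in the complementary closed arc that runs forward from $x_j+g$ to $x_j$, an arc of length $r-g$.

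I would then choose the rotation $\mu\equiv r-1-x_j\pmod r$, which sends $x_j$ to position $r-1$ and hence gives $[x_j+\mu]_r=r-1$, the required exact hit. It remains to check that every shifted point lands in the window $[\left\lceil\frac{r}{h}\right\rceil-1,\,r)$. Under this rotation the length-$(r-g)$ arc containing all points is carried to the arc ending at $r-1$: one verifies $x_j+g\mapsto g-1$, and since $g<r$ (for $h\ge 2$) there is no wrap-around, so the image is the integer interval $[g-1,\,r-1]$. Thus $g-1\le[x_i+\mu]_r\le r-1<r$ for every $i$, and because $g\ge\left\lceil\frac{r}{h}\right\rceil$ the lower bound becomes $[x_i+\mu]_r\ge\left\lceil\frac{r}{h}\right\rceil-1$, exactly as claimed; the case $h=1$ is immediate with $\mu=r-1-x_1$.

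The only genuinely delicate points are getting the ceiling $\left\lceil\frac{r}{h}\right\rceil$ (rather than a floor) out of the pigeonhole step, and the bookkeeping that the complementary arc, after rotation, occupies the contiguous interval $[g-1,\,r-1]$ without wrapping past $0$. Both become transparent once the cyclic picture is fixed, so I do not expect a serious obstacle; the proof is essentially a rotation argument plus one counting inequality.
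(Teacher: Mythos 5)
Your proof is correct. Note that this paper does not itself contain a proof of this lemma---it is quoted as ``proved in \cite{BS2}''---but your argument is exactly the natural (and, as far as the cited source goes, the same) one: pigeonhole on the $h$ cyclic gaps, which sum to $r$, to find a gap $g\geq\left\lceil \frac{r}{h}\right\rceil$ (your estimate $h\left(\left\lceil \frac{r}{h}\right\rceil-1\right)\leq r-1<r$ is right), then rotate by $\mu\equiv r-1-x_j \pmod r$ so that the point preceding that gap lands on $r-1$ and the arc containing all points maps, without wrap-around, onto $\{g-1,\dots,r-1\}\subseteq\left\{\left\lceil \frac{r}{h}\right\rceil-1,\dots,r-1\right\}$; all steps, including the $h=1$ case, check out.
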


\begin{theorem}\label{sPDlike}
Let $\alpha$ be a fixed $n$-th primitive root of unity and $\varphi: \Z_n\longrightarrow \Z_{r_1}\times\Z_{r_2}$ a fixed isomorphism. Let $T_\alpha\in S_{G^*}$ be the automorphism defined in (\ref{talfa}) and $I'=I\setminus\{0\}$, where $I$ is defined in (\ref{conjinf}). Then, the group generated by $T_\alpha$ is a $s$-PD-like set for $\C^*$ with respect to $I'$ where
$$s=(\lambda_0+1) r_2-1$$
and $\displaystyle \lambda_0=\max\left\{\lambda\mid m<\left\lceil \frac{r_1}{\lambda}\right\rceil\right\}$.
\end{theorem}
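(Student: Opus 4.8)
The plan is to transport the whole problem to $\Z_{r_1}\times\Z_{r_2}$ through the fixed isomorphism $\varphi$ and the induced map $\phi:\langle T_\alpha\rangle\to\langle T_1,T_2\rangle$. Under this identification each position $\alpha^i$ becomes a pair $(x,y)$, the action of $T_\alpha^k$ becomes the diagonal translation $(x,y)\mapsto(\overline{x+k},\overline{y+k})$, and $I'$ is carried to $\Gamma=\{(i_1,0)\mid 0\le i_1\le m-1\}$, a window of length $m$ sitting on the level $y=0$. The crucial observation is that a pair can belong to $\Gamma$ only when its second coordinate is $0$; hence, for a shift by $k$, the only points that can fall inside $\Gamma$ are those whose $y$-coordinate satisfies $y\equiv -k\pmod{r_2}$, i.e.\ those lying on a single level $v^\ast=[-k]_{r_2}$.

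Given any $s$ positions, I would first distribute them among the $r_2$ levels according to their $y$-coordinate. Since $s=(\lambda_0+1)r_2-1$, a pigeonhole count (indeed $\lfloor s/r_2\rfloor=\lambda_0$) shows that some level $v^\ast$ contains at most $\lambda_0$ of the points. I would then choose $k$ so that this sparsest level is the one carried onto $y=0$; this fixes $k$ modulo $r_2$, and because $\gcd(r_1,r_2)=1$ the Chinese Remainder Theorem leaves $[k]_{r_1}$ completely free. Consequently I am reduced to a one-dimensional problem: translating the $h\le\lambda_0$ first coordinates of the points on level $v^\ast$ inside $\Z_{r_1}$ so that none of them lands in the window $\{0,\dots,m-1\}$.

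This one-dimensional step is exactly what Lemma \ref{juntarpuntos} delivers: applied with $r=r_1$ to the $h$ first coordinates, it produces a shift placing all of them in $\{\lceil r_1/h\rceil-1,\dots,r_1-1\}$. Since $h\le\lambda_0$ and $\lceil r_1/h\rceil$ is non-increasing in $h$, the definition of $\lambda_0$ gives $\lceil r_1/h\rceil>m$, so that $\lceil r_1/h\rceil-1\ge m$ and the translated coordinates avoid $\{0,\dots,m-1\}$ entirely. Choosing $[k]_{r_1}$ to be this shift, every point on level $v^\ast$ now avoids $\Gamma$, while every point on the remaining levels is sent to a $y$-coordinate different from $0$ and therefore lies outside $\Gamma$ automatically. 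Thus $T_\alpha^k$ moves all $s$ positions out of $I'$, which is precisely the $s$-PD-like property.

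The routine parts are the pigeonhole count and the remark that the levels $y\neq 0$ are automatically safe; the step requiring the most care is the middle one, where one must combine the CRT freedom in the first coordinate with the correct orientation of Lemma \ref{juntarpuntos} (packing the points into the top arc and matching the inequality $\lceil r_1/h\rceil>m$ with the definition of $\lambda_0$). It is also worth recording at the outset that $Ord_{r_1}(2)=m$ forces $r_1\ge m+1$, so that $\lambda_0\ge 1$ and the statement is non-vacuous.
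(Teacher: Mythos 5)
Your proof is correct and follows essentially the same route as the paper's: transport everything to $\Z_{r_1}\times\Z_{r_2}$ via $\varphi$, pigeonhole on the $r_2$ levels to find one carrying at most $\lambda_0$ points, shift that level onto $y=0$, and finish by applying Lemma \ref{juntarpuntos} to its first coordinates, the remaining levels being safe automatically. The only cosmetic differences are that you merge the paper's two cases ($\pi_2(B')\subsetneq\Z_{r_2}$ versus $\pi_2(B')=\Z_{r_2}$) into a single pigeonhole step, and you make explicit the CRT decoupling of the two shift components of $T_\alpha^k$, where the paper simply invokes the isomorphism $\langle T_\alpha\rangle\cong\langle T_1,T_2\rangle$.
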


\begin{proof}
  Let $B\subseteq \{\alpha^i\}_{i=0}^{n-1}$, with $\left|B \right|=s$, and consider $B'=\{\varphi(i):\alpha^i \in B\}\subseteq \Z_{r_1}\times\Z_{r_2}.$ 
	
	Then, if we prove that there exists $\tau\in \langle T_1, T_2\rangle$ such that $\tau(B')\cap \Gamma=\emptyset$ then we will be done.
	
	Let us consider $\pi_i(B')$ the projection onto $\Z_{r_i}$ for $i=1,2$. For each $j\in \pi_2(B')$ we define $B_j=\{b\in B'\mid \pi_2(b)=j\}$. Then $B'=\bigcup\limits_{j\in \pi_2(B')} B_j$ and it is a disjoint union.
	
	First, suppose that $\pi_2(B')\subsetneq \Z_{r_2}$. Then we have that there exists $\delta\in \N$ such that $\pi_2\left(T_2^\delta(B')\right)\subseteq \{1,\dots,r_2-1\}$ so $T_2^\delta(B')\cap \Gamma=\emptyset$ and we are done.
	
	Now, suppose that $\pi_2(B')= \Z_{r_2}$. We claim that there must exist $j_0\in \pi_2(B')$ such that $\left|B_{j_0}\right|\leq \lambda_0$. Otherwise 
	 $$(\lambda_0+1) r_2-1=s=\left|B'\right|=\sum\limits_{j\in\pi_2(B')}\left|B_j\right|\geq (\lambda_0+1)\cdot r_2,$$
a contradiction.

  Then, let $\delta\in \N$ be such that $\pi_2\left(T_2^\delta(B_{j_0})\right)=0$ (and $\pi_2\left(T_2^\delta(B_{j})\right)\neq 0$ for any $j\neq j_0, j\in \Z_{r_2}$).
	
	By Lemma \ref{juntarpuntos}, applied to $r=r_1$ and the elements of $\pi_1\left(B_{j_0}\right)$ we have that there exists $\mu$ such that $\pi_1\left(T_1^\mu(B_{j_0})\right)\subseteq \{m,\dots, r_1-1\}$. So we conclude that
	$$T_1^\mu T_2^\delta(B')\cap\Gamma=\emptyset$$
	
	which finishes the proof.
\end{proof}

Then, we establish our final result.

\begin{theorem}\label{MainResult}
 Let $R(1,m)$ be the first-order Reed-Muller code of length $2^m$, ($m\in \N, m>2$). Let $n=2^m-1=r_1\cdot r_2$ with $gcd(r_1,r_2)=1, r_1,r_2>1$ and $Ord_{r_1}(2)=m$. Then we can correct up to $s$ errors by using {\scshape Algorithm II} where
 $$s=(\lambda_0+1)\cdot r_2-1$$
and $\lambda_0=max\{\lambda\mid m<\left\lceil \frac{r_1}{\lambda}\right\rceil\}$.
\end{theorem}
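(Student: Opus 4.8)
The plan is to derive this statement directly from the two principal results already in place: Theorem \ref{sPDlike}, which produces the explicit PD-like set, and Theorem \ref{PDnew}, which certifies that any PD-like set drives Algorithm II. So the proof will essentially be a verification that the hypotheses chain together correctly. First I would note that the present assumptions are exactly the data feeding Theorem \ref{sPDlike}: the condition $Ord_{r_1}(2)=m$ is precisely what gives $a=m$ in Theorem \ref{teoremainfosetprimerorden}, so that the information set is the one fixed in (\ref{conjinf}) with $\Gamma=\{(i_1,i_2)\mid 0\le i_1\le m,\ 0\le i_2<1\}$, and $\alpha$, $\varphi$ are the same fixed objects.

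With the hypotheses aligned, I would invoke Theorem \ref{sPDlike} to get that $\langle T_\alpha\rangle$ — which lies in $\PAut(\C^*)\subseteq\PAut(\C)$, being the cyclic group generated by the element $T_\alpha\in\PAut(\C^*)$ — is an $s$-PD-like set for $\C^*$ with respect to $I'=I\setminus\{0\}$, where $s=(\lambda_0+1)r_2-1$ and $\lambda_0=\max\{\lambda\mid m<\lceil r_1/\lambda\rceil\}$. Since $0\in I$, this $I'$ is exactly the set (\ref{Iprima}) attached to $I$, so the hypothesis of Theorem \ref{PDnew} is satisfied; applying it yields that Algorithm II corrects up to $s$ errors. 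The core of the argument is thus the single implication chain ``hypotheses match $\Rightarrow$ Theorem \ref{sPDlike} $\Rightarrow$ Theorem \ref{PDnew}'', with no new computation required beyond this matching.

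The point I expect to demand the most care — the true obstacle — is reconciling $s$ with the genuine correction capability $t$. Because $R(1,m)$ is a $[2^m,m+1,2^{m-1}]$ code, one has $t=2^{m-2}-1$, and Theorem \ref{PDnew} (through Theorem \ref{PD}) only decodes correctly while the error weight stays at most $t$. When $s\le t$ the stated conclusion holds verbatim. When $s>t$, the clean reading is that an $s$-PD-like set is a fortiori an $s'$-PD-like set for every $s'\le s$, so $\langle T_\alpha\rangle$ is in particular a $t$-PD-like set and Algorithm II still corrects all $t$ errors; that is, one corrects up to $\min\{s,t\}$ errors. I would therefore either carry $s\le t$ as a standing assumption or state the bound as $\min\{s,t\}$, and confirm $s\le t$ in the concrete cases of Section \ref{Examples} so that the advertised value $s$ is actually attained.
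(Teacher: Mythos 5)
Your proposal is correct and follows exactly the paper's own route: the paper's proof is precisely the chain you describe (the hypothesis $Ord_{r_1}(2)=m$ fixes the information set (\ref{conjinf}), Theorem \ref{sPDlike} gives that $\langle T_\alpha\rangle$ is an $s$-PD-like set with respect to $I'$, and Theorem \ref{PDnew} then certifies Algorithm II), with Lemma \ref{descomposicion2} cited only to guarantee that such a decomposition exists.

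Where you go beyond the paper is the reconciliation of $s$ with $t$, and this is not pedantry: Theorem \ref{PDnew} carries the hypothesis $s\leq t$, the paper's proof of Theorem \ref{MainResult} never verifies it, and it genuinely fails in the smallest case. For $m=4$ one has $r_1=5$, $r_2=3$, $\lambda_0=1$, hence $s=5$, while $R(1,4)$ is a $[16,5,8]$ code with true correction capability $t=\lfloor (8-1)/2\rfloor=3$; correcting $5$ errors by any unique decoder is impossible, so the theorem as stated (and the $m=4$ row of Table \ref{tabla}) overclaims. (The table masks this by listing $t=2^{m-1}$, the minimum distance, in the column labelled error-correction capability.) Your proposed fix --- either assuming $s\leq t$ or stating the bound as $\min\{s,t\}$, noting that an $s$-PD-like set is automatically an $s'$-PD-like set for all $s'\leq s$ --- is exactly what is needed, and it is harmless for the remaining table entries, where one can check that $s\leq 2^{m-2}-1$ holds for all $m\geq 6$ (asymptotically $s\approx 2^m/m \ll 2^{m-2}$). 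So your write-up is the paper's argument plus a correction the paper actually needs.
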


\begin{proof}
 By Lemma~\ref{descomposicion2} we can take the decomposition $n=r_1\cdot r_2$ satisfying the imposed conditions. 

Then we fix the information set $I$ given in (\ref{conjinf}). From Theorem~\ref{sPDlike} the group $\langle T_\alpha\rangle\subseteq \PAut(R^*(1,m))\subseteq\PAut(R(1,m))$ is a $s$-PD-like set for $R^*(1,m)$ with respect to $I$.

Therefore, by Theorem~\ref{PDnew} we can use {\scshape Algorithm II} successfully.
\end{proof}

\section{Examples}\label{Examples}

In this final section we exhibit Table \ref{tabla} that includes the first values of $m$ that satisfy conditions (\ref{condFirstOrder}). For them we apply Theorem \ref{MainResult} to give the number of errors we can correct by using {\scshape Algorithm II}. 

Let us observe that, for some values of $m$, there exist more than one valid decomposition. It is an open problem to find conditions on those decompositions that determine ``a priori'' the optimal one in order to correct the maximum number of errors. For such values of $m$ we only show the best decomposition.

On the other hand, we add the columns that corresponds to the results obtained for first-order Reed-Muller codes, $R(1,m)$, in the references we mentioned in the introduction. Specifically:

\begin{itemize}
	\item \textbf{Column A:} In \cite{KMM2} the authors mention that the group of translations of $\F_2^m$ is an $s$-PD-set where 
	$$s=\min\left\{\left\lfloor\frac{2^m-1}{1+m} \right\rfloor, 2^{m-2}-1\right\}$$
	We include the value of $s$.
	\item \textbf{Columns B1,B2:} In \cite{KMM3} the authors explain that there is an $(M-1)$-PD-set provided there exists a $(m,M,3)$ binary code. They also show that for any $1\leq s\leq \left\lfloor\frac{2^m}{m+1} \right\rfloor-1$ there exist $s$-PD-sets. We include the values $(M-1)$ and $s$ respectively.
\end{itemize}

\begin{table}[h]
\caption{Number of corrected errors}
\label{tabla}
\centering

\begin{tabular}{|c|c|c||c|c||c|c|c|c|}
 \hline &&&&&&&&\\
 $m$ & $r_1$ & $r_2$ &$l$&$t$&  $A$ & $B1$ & $B2$& {\scshape Algorithm II}\\  &&&&&&&&\\ \hline
4&5&3&16&8&   3&1&2&5\\ \hline
6&9&7&64&32&  5&3&8&13\\ \hline
$8^*$&17&15&256&128&   28&16&27&44\\ \hline
9&73&7&512&256&   51&32&50&62\\ \hline
$10^*$&11&93&1024&512&  93&64&92&185\\ \hline
$11^*$&23&89&2048&1024&   170&128&169&266\\ \hline
$12^*$&13&315&4096&2048&   315&256&314&629\\ \hline
$14^*$&43&381&16384&8192&   1092&1024&1091&1523\\ \hline
$15^*$&151&217&32768&16384&   2047&2048&2047&2386\\ \hline
$16^*$&257&255&65536&32768&   3855&2048&3854&4334\\ \hline
\end{tabular}
\end{table}

The columns $l$ and $t$ include the length and the error-correction capability of the Reed-Muller code respectively. Finally, we write the $(^*)$ for the cases where there exist more than one valid decomposition. 

\begin{remark}
  Observe that according to the Gordon-Schönheim bound (mentioned in the introduction), the minimal size of a $s$-PD-set is precisely $s+1$, that is, 1 plus the value contained in column $A$.
\end{remark}

\end{document}